\documentclass[reqno]{amsart}
\usepackage{amsmath}
\usepackage{amsfonts,amscd, amssymb}
\usepackage{hyperref}
\usepackage{xcolor}
\usepackage{tikz-cd}
\usepackage{amsthm,latexsym,euscript,amscd}

\usetikzlibrary{matrix}
\newtheorem{theorem}{Theorem}[section]

\newtheorem{corollary}[theorem]{Corollary}
\newtheorem{proposition}[theorem]{Proposition}
\theoremstyle{remark}
\newtheorem{remark}[theorem]{Remark}
\theoremstyle{definition}
\newtheorem{definition}[theorem]{Definition}

\numberwithin{equation}{section} 
\makeatother

\DeclareMathOperator{\Cdb}{{\mathbb C}}

\DeclareMathOperator{\Rdb}{{\mathbb R}}

\DeclareMathOperator{\Al}{{\mathcal A}}
\DeclareMathOperator{\Bl}{{\mathcal B}}

\DeclareMathOperator{\Hl}{{\mathcal H}}

\newcommand{\ball}[1]{\operatorname{Ball}(#1)}

\newcommand{\norm}[1]{\left\Vert#1\right\Vert}

\numberwithin{equation}{section}



\makeatletter
\def\blfootnote{\xdef\@thefnmark{}\@footnotetext}

\begin{document}


\title[Suzuki Type Estimates and Lie-Trotter Formulas 
in JB-Algebras]
{Suzuki Type Estimates for Exponentiated Sums 
	\\
	and Generalized Lie-Trotter Formulas \\ 
in JB-Algebras}


\author{Sarah Chehade}
\address{Quantum Computational Science Group, Computational Sciences and Engineering Division, Oak Ridge National Laboratory Oak Ridge, TN 37831 USA }
\email{chehades@ornl.gov}
\author{Shuzhou Wang}
\address{Department of Mathematics, University of Georgia, Athens, GA 30602 USA}
\email{szwang@uga.edu}

\author{Zhenhua Wang}
\address{Department of Mathematics, University of Georgia, Athens, GA 30602 USA}
\address{Current Address: Department of Physics, Chemistry and Mathematics, Alabama A\&M University, Normal, AL 35762-4900}
\email{ezrawang@uga.edu}


\subjclass[2010]{Primary
	46H70, 
	17C90,
	15A16;
	Secondary 
	17C65, 
	81R15,
	81P45}
\keywords{Lie-Trotter formula, Suzuki approximation, JB-algebra, Jordan-Banach algebra}
\date{}
\begin{abstract}

Lie-Trotter-Suzuki product formulas are ubiquitous in quantum mechanics, computing, and simulations. Approximating exponentiated sums with such formulas are investigated in the JB-algebraic setting. We show that the Suzuki type approximation for exponentiated sums 
holds in JB-algebras, we give explicit estimation formulas, 
and we deduce three generalizations of Lie-Trotter formulas for 
arbitrary number elements in such algebras. 
We also extended the Lie-Trotter formulas  
in a Jordan Banach algebra from three elements  
to an arbitrary number of elements. 
\end{abstract}
\maketitle

\section{Introduction}
Lie-Trotter product formulas provide an elegant approach to approximate exponential sums with a product of exponentials. This technique is vastly used in quantum mechanics whenever a Hamiltonian is written as a sum of operators that do not commute. Such a Hamiltonian operator describes a system's total energy. 

In 1934, Jordan axiomatized quantum theory through what are known as Jordan algebras \cite{jordan1993algebraic}. These \emph{non-associative} algebras capture quantum mechanics in a more general way. It was conjectured that this formulation of quantum mechanics would also apply to relativity and nuclear phenomena \cite{jordan1993algebraic}. Alternatively, Jordan algebras are the abstraction and generalization of real/complex quantum theory. For example, in standard quantum theory, a product of 2 observables (self adjoint operators) need not be an observable. However a \emph{Jordan} product of 2 observables is indeed an observable. As a motivation for this line of research, observables in a quantum system constitute a JB-algebra which is non-associative, therefore JB-algebras were considered as natural objects of study for quantum system, where Lie-Trotter product formulas are heavily used. 

Though widely used, product formula error estimates suffer from a lack of understanding, difficulty in its computation, and scalability. 
In the 1976 paper \cite{Suzuki1976}, Suzuki established an estimate
for the product formula approximation for exponentiated sums of an arbitrary number of elements in a Banach algebra, generalizing the Lie-Trotter formula for just two elements. 
Later, in a 1985 paper \cite{Suzuki1985}, he established another estimate for exponentiated sums in Banach algebras, improving the one in \cite{Suzuki1976}. 
Over the years, error estimates of Lie-Trotter product formulas were formulated for bounded matrix algebras, Hilbert spaces, Banach spaces, and operator ideals along with their unbounded counterparts.

This paper introduces Lie-Trotter product formula error estimates for particular Jordan algebras. It should be pointed out that because of non-associativity, the connotations of Lie-Trotter or generalized Suzuki product formulas here are different from the classical situation in Banach algebras, where associativity holds. Even if all elements in JB-algebras are Hilbert space operators, our result in Theorem \ref{Tjsuzuki1} appears to be new in the literature.

In Section \ref{section: error}, we prove three error estimates of Lie-Trotter product formulas in the JB-algebraic setting, which is a subclass of Jordan algebras. In addition, we deduce three generalizations of the Lie-Trotter product formulas for an arbitrary finite number of elements in such algebras and we give three explicit estimation formulas: one of order $\displaystyle \frac{1}{n}$, and two of order $\displaystyle \frac{1}{n^2}$. The latter 2 are a quadratic speed up of the original estimate \cite{Suzuki1976}. Despite an arbitrary number of elements in a non-associative JB-algebra, 
it is surprising that Suzuki type estimates still hold for such algebras.

More recently, Escolano, Peralta, and Villena \cite{Peralta2023} extended the Lie-Trotter formula of two and three elements to Jordan-Banach algebras over complex numbers. In Section \ref{section: general}, we generalize their results to an exponentiated finite arbitrary sum of elements in Jordan-Banach algebras. It is not clear to us and an interesting question whether Suzuki's error estimates are also valid for these general Jordan-Banach algebras. 

\section{Preliminaries} 

In this section, we give some background on Jordan Banach algebra, JB-algebra and fix the notation. For more information, we refer the reader to \cite{Alfsen2003, hanche1984jordan}.

\begin{definition}\label{DefJa}
	A {\bf Jordan algebra} $\Al$ over real number is a vector space $\Al$ over $\Rdb$ equipped with a  bilinear product $\circ$ that satisfies the following identities:
	$$A\circ B =B\circ A, \,\ \,\ (A^2\circ B)\circ A=A^2\circ (B\circ A),$$
where $\displaystyle A^2$ means $\displaystyle A\circ A.$ 
\end{definition}
Any associative algebra $\Al$ has an underlying Jordan algebra structure with Jordan product given by 
	$$A\circ B=(AB+BA)/2.$$
	Jordan subalgebras of such underlying Jordan algebras are called {\bf special}. 
	
\begin{definition}\label{DefwJB}
	A real or complex {\bf Jordan-Banach algebra} is a Jordan algebra $\Al$ over $\Rdb$ or $\Cdb$ with a complete norm satisfying $\norm{A\circ B}\leq \norm{A}\norm{B}$ for $A, B\in \Al.$ 
\end{definition}


\begin{definition}\label{DefJB}
	A {\bf JB-algebra} is a Jordan algebra ${\mathcal A}$ over ${\mathbb R}$ with a complete norm satisfying the following conditions for $A, B\in {\mathcal A}:$ 
	$$
	\Vert A\circ B\Vert \leq \Vert A\Vert \Vert B\Vert,~~\Vert A^2\Vert=\Vert A\Vert^2,~~{\rm and}~~\Vert A^2\Vert \leq \Vert A^2+B^2\Vert.	
	$$
\end{definition}
Note that every JB-algebra is a real Jordan Banach algebra but the converse is not true. 

The set of bounded self adjoint operators on a Hilbert space $H$, denoted by $B(H)_{sa}$, is an important object in physics, as it is the set of observables in a quantum mechanical system. Moreover, it is a JB-algebra, i.e., it is not an associative algebra.

\begin{definition}
	Let $\Al$ be a Jordan algebra and $A, B, C \in \Al$.  Define 	
	\begin{align}\label{JI}
	\{ABC\}&:=(A\circ B)\circ C+(B\circ C)\circ A-(A\circ C)\circ B.
			\end{align}
\end{definition}
Note that in some literature for $C=A$, $\{ABA\}$ is denoted by $U_A(B)$ instead. This linear map is extremely important for many reasons. For example, a non-trivial fact is if $\Al$ is a JB-algebra and $B$ is positive, then so is $\{ABA\}$. Another example of its power is its use in Theorem \ref{Tjsuzuki2}. 
In general, $ABA$ is meaningless unless $\Al$ is special, in which case $\{ABA\}=ABA.$

Recall for an element $A$ in a JB-algebra $\Al$ and a continuous function $f$ on the spectrum of $A$, 
$f(A)$ is defined by functional calculus in JB-algebras (see e.g. \cite[Proposition 1.21]{Alfsen2003}).  

\section{Suzuki estimations }\label{section: error}
For $\displaystyle A_1, A_2,\cdots, A_m$ in a unital JB-algebra or Jordan-Banach algebra, let
\begin{align*}
	g_n\left(\{A_j\}\right)&=\left(\left[\left(\exp\left(\frac{A_1}{n}\right)\circ \exp\left(\frac{A_2}{n}\right)\right)\circ\cdots \right]\circ \exp\left(\frac{A_m}{n}\right)\right)^n,\\
	f_n\left(\{A_j\}\right)&=\left(\left\{\exp\left(\frac{A_m}{2n}\right)\cdots\left\{\exp\left(\frac{A_2}{2n}\right)\exp\left(\frac{A_1}{n}\right)\exp\left(\frac{A_2}{2n}\right) \right\}\cdots	\exp\left(\frac{A_m}{2n}\right)\right\}\right)^n.	
\end{align*}

The following theorem extends Suzuki's estimations for exponential of sums \\ 
in \cite[Theorem 3]{Suzuki1976} to the non-associative setting and with the improved order $\displaystyle \frac{1}{n^2}$. 
\begin{theorem}\label{Tjsuzuki1}
	For any finite number of elements $\displaystyle A_1, A_2,\cdots, A_m$ in a unital JB-algebra $\Al,$
	
	\begin{align*}
		\left\Vert \exp \left( \sum_{j=1}^m A_j\right)-g_n(\{A_j\})\right \Vert	
		\leq
		\dfrac{1}{3n^2}\left( \sum_{j=1}^m \Vert A_j\Vert\right)^3\exp\left(\sum_{j=1}^m \Vert A_j\Vert\right).	
	\end{align*}
\end{theorem}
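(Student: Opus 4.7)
The plan is to view both sides of the estimate as values at $x=1/n$ (then raised to the $n$th power) of two analytic curves in $\Al$, namely
$$f(x)=\bigl(\cdots(\exp(A_1 x)\circ\exp(A_2 x))\circ\cdots\bigr)\circ\exp(A_m x) \quad\text{and}\quad q(x)=\exp\Bigl(\Bigl(\sum_{j=1}^m A_j\Bigr)x\Bigr),$$
so that $g_n(\{A_j\})=f(1/n)^n$ and $\exp(\sum_j A_j)=q(1/n)^n$. The driving observation is that, because the Jordan product is \emph{commutative}, the low-order Taylor coefficients of $f$ coincide with those of $q$ not only to first order (which would already suffice for a $1/n$ estimate) but to second order as well, so the first discrepancy appears at order $x^3$. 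This extra order is exactly what upgrades the associative Suzuki rate $1/n$ to the rate $1/n^2$ claimed in the theorem.

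To execute the Taylor comparison I would first iterate the Leibniz rule for the bounded bilinear map $\circ$ through the left-to-right bracketing of $f$, obtaining
$$f^{(k)}(x)=\sum_{k_1+\cdots+k_m=k}\binom{k}{k_1,\ldots,k_m}\,\bigl(\cdots(f_1^{(k_1)}(x)\circ f_2^{(k_2)}(x))\circ\cdots\bigr)\circ f_m^{(k_m)}(x),$$
with $f_j(x)=\exp(A_j x)$ and $f_j^{(\ell)}(x)=A_j^\ell\exp(A_j x)$ (well-defined via functional calculus on the abelian JB-subalgebra generated by $A_j$). Submultiplicativity $\|X\circ Y\|\le\|X\|\|Y\|$ together with the multinomial theorem then delivers the uniform bound $\|f^{(k)}(x)\|\le(\sum_j\|A_j\|)^k\exp((\sum_j\|A_j\|)x)$, and the same bound holds trivially for $q^{(k)}$. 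Evaluating the Leibniz sum at $x=0$ and using $I\circ X=X$ gives $f(0)=q(0)=I$, $f'(0)=q'(0)=\sum_j A_j$, and $f''(0)=q''(0)=(\sum_j A_j)^{\circ 2}$; the last equality is precisely where the symmetry $A_i\circ A_j=A_j\circ A_i$ is used. Taylor's theorem with integral remainder applied to $f-q$ then gives
$$\|f(x)-q(x)\|\le\int_0^x\frac{(x-s)^2}{2}\,\|f^{(3)}(s)-q^{(3)}(s)\|\,ds\le\frac{x^3}{3}\Bigl(\sum_{j=1}^m\|A_j\|\Bigr)^3\exp\Bigl(x\sum_{j=1}^m\|A_j\|\Bigr)$$
for $x\ge 0$, and specializing to $x=1/n$ yields a one-step error of order $1/n^3$.

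The remaining task is to upgrade this one-step bound to a bound on the $n$th Jordan powers $f(1/n)^n$ and $q(1/n)^n$, and this is the main obstacle, since in a general non-associative JB-algebra the standard telescoping $X^n-Y^n=\sum_{k=0}^{n-1}X^k(X-Y)Y^{n-1-k}$ is not directly available. I plan to handle this by invoking the Shirshov--Cohn theorem for JB-algebras (see \cite{hanche1984jordan}) to isometrically embed the JB-subalgebra generated by $f(1/n)$, $q(1/n)$, and the unit into $B(H)_{sa}$ for some Hilbert space $H$; since Jordan powers computed through functional calculus coincide with operator powers under this embedding, the associative telescoping applies inside $B(H)$ and yields $\|X^n-Y^n\|\le n\max(\|X\|,\|Y\|)^{n-1}\|X-Y\|$ with $X=f(1/n)$ and $Y=q(1/n)$. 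Combining this with the one-step bound and with the elementary estimates $\|f(1/n)\|,\|q(1/n)\|\le\exp((\sum_j\|A_j\|)/n)$, the exponents telescope as $\max(\|X\|,\|Y\|)^{n-1}\cdot\exp((\sum_j\|A_j\|)/n)\le\exp(\sum_j\|A_j\|)$, while the factor of $n$ cancels one power of $1/n$ in the one-step bound, producing exactly the constant $\frac{1}{3n^2}(\sum_j\|A_j\|)^3\exp(\sum_j\|A_j\|)$ asserted by the theorem.
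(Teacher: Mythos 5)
Your proposal is correct and follows essentially the same route as the paper: both reduce the problem to a one-step estimate via the Shirshov--Cohn embedding of the JB-subalgebra generated by the two elements and the unit (so that the associative telescoping of $X^n-Y^n$ applies in $B(H)_{sa}$), and both rest on the observation that the product curve and $\exp\bigl(x\sum_j A_j\bigr)$ agree to second order at $x=0$, giving a one-step error of $\frac{1}{3n^3}\bigl(\sum_j\Vert A_j\Vert\bigr)^3\exp\bigl(\frac{1}{n}\sum_j\Vert A_j\Vert\bigr)$. The only difference is cosmetic: the paper bounds the one-step error by comparing $C$ and $D$ separately against their common degree-$2$ Taylor polynomial $F$ and estimating the tails termwise via Suzuki's Theorem 1, whereas you use the integral form of the Taylor remainder together with a uniform Leibniz-rule bound on third derivatives; both computations yield the same constant.
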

\begin{proof}
	Let
	\begin{align}
		C&=\exp\left(\frac{1}{n}\sum_{j=1}^m A_j\right),\label{NotionC}\\
		D&=\left[\left(\exp\left(\frac{A_1}{n}\right)\circ \exp\left(\frac{A_2}{n}\right)\right)\circ\cdots \right]\circ \exp\left(\frac{A_m}{n}\right)\nonumber.
	\end{align}
	Then,
	\begin{align*}
		\Vert C\Vert\leq\exp \left(\frac{1}{n}\sum_{j=1}^m \Vert A_j\Vert\right)\,\ \mbox{and}\,\  
		\Vert D\Vert\leq \exp \left(\frac{1}{n}\sum_{j=1}^m \Vert A_j\Vert\right).
	\end{align*}
	
	Let $\Bl$ be the JB-subalgebra generated by $\{C, D, I\}.$ By Shirshov-Cohen theorem for JB-algebras (see e.g. \cite[Theorem 7.2.5]{hanche1984jordan}), there exists an isometric Jordan homomorphism $\pi: \Bl \to B(\Hl)_{sa},$ where $\Hl$ is a complex Hilbert space. Therefore,
	\begin{align}
		s&=\left\Vert \exp \left( \sum_{j=1}^m A_j\right)-g_n(\{A_j\})\right \Vert	=\left\Vert C^n-D^n \right\Vert \nonumber \\
		&=\vert \pi(C^n)-\pi(D^n)\Vert=\Vert \pi(C)^n-\pi(D)^n\Vert \nonumber \\
		&\leq \Vert \pi(C)-\pi(D)\Vert\left(\Vert \pi(C)\Vert^{n-1}+\Vert \pi(C)\Vert^{n-2}\Vert \pi(D)\Vert+\cdots+ \Vert \pi(C) \Vert \Vert \pi(D)\Vert^{n-2} +\Vert \pi(D)\Vert^{n-1}\right) \nonumber\\
		&=\Vert C-D\Vert\left(\Vert C\Vert^{n-1}+\Vert C\Vert^{n-2}\Vert D\Vert+\cdots+\Vert D\Vert^{n-1}\right)\nonumber \\
		&\leq  n \Vert C-D\Vert\left(\max\{\Vert C\Vert, \Vert D\Vert\}  \right)^{n-1} \nonumber \\
		&\leq n \Vert C-D\Vert \exp \left(\frac{n-1}{n}\sum_{j=1}^m \Vert A_j\Vert\right)\nonumber \\
		&\leq n \left(\Vert C-F\Vert +\Vert D-F\Vert\right)\exp \left(\frac{n-1}{n}\sum_{j=1}^m \Vert A_j\Vert\right), \label{GHnbound3} 
	\end{align}
	where
	\begin{align}
		F=	I+\left( \sum_{j=1}^m \frac{A_j}{n}\right)+\frac{1}{2!}\left( \sum\limits_{j=1}^m \dfrac{A_j}{n}\right)^2. \label{NotionF}
	\end{align}
	We claim that the degree 2 Taylor polynomial of $\displaystyle C$ and $\displaystyle D$ is $K.$ It is obvious for $\displaystyle C$. For any $2\leq k\leq m,$ let
	\begin{align*}
		D_k=\left[\left(\exp\left(\frac{A_1}{n}\right)\circ \exp\left(\frac{A_2}{n}\right)\right)\circ\cdots \right]\circ \exp\left(\frac{A_k}{n}\right).	
	\end{align*}
	Denote
	\begin{align}
		F_k=	I+\left( \sum_{j=1}^k \frac{A_j}{n}\right)+\frac{1}{2!}\left( \sum\limits_{j=1}^k \dfrac{A_j}{n}\right)^2. \label{termK}
	\end{align}
	The degree 2 Taylor polynomial of $\displaystyle D_2$ is 
	\begin{align*}
		F_2=I+\left(\frac{A_1}{n}+\frac{A_2}{n}\right)+\dfrac{1}{2!}\left(\frac{A_1}{n}+\frac{A_2}{n}\right)^2	
	\end{align*}
	Since $\displaystyle  D_3=D_2 \circ \exp\left(\frac{A_3}{n}\right),$ a similar computation shows that the degree 2 Taylor polynomial of $\displaystyle D_3$ is 
	\begin{align*}
		I+\left(\frac{A_1}{n}+\frac{A_2}{n}\right)+\dfrac{1}{2!}\left(\frac{A_1}{n}+\frac{A_2}{n}\right)^2+\frac{A_3}{n}+\left(\frac{A_1}{n}+\frac{A_2}{n}\right)\circ \frac{A_3}{n}+	\frac{\left(\frac{A_3}{n}\right)^2}{2},
	\end{align*}
	which is  
	\begin{align*}
		F_3=I+\left(\frac{A_1}{n}+\frac{A_2}{n}+\frac{A_3}{n}\right)+\dfrac{1}{2!}\left(\frac{A_1}{n}+\frac{A_2}{n}+\frac{A_3}{n}\right)^2.
	\end{align*}
	By induction, the degree 2 Taylor polynomial of $D_m$ is $F_m.$
	
	Therefore, 
	\begin{align}
		\Vert C-F\Vert&=\left\Vert \sum_{k=3}^{\infty}\left(\frac{A_1+A_2\cdots+A_m}{n}\right)^k\frac{1}{k!}\right\Vert \nonumber \\
		&\leq \sum_{k=3}^{\infty}\left(\frac{\Vert A_1\Vert+\Vert A_2\Vert\cdots+\Vert A_m\Vert}{n}\right)^k\frac{1}{k!}\nonumber \\ 
		&= \exp \left(\frac{1}{n}\sum_{j=1}^m \Vert A_j\Vert \right)-\left(I+\dfrac{\sum_{j=1}^m \Vert A_j\Vert}{n}+\dfrac{\left(\sum_{j=1}^m \Vert A_j\Vert\right)^2}{2n^2}\right) \nonumber \\\
		&\leq \frac{1}{3!\cdot n^3}\left(\sum_{j=1}^m \Vert A_j\Vert \right)^3\exp \left(\frac{1}{n}\sum_{j=1}^m \Vert A_j\Vert \right) \label{TaylorC} 
	\end{align}
	\begin{align}
		\Vert D-F\Vert&=\Vert D_m-F_m\Vert \nonumber \\ 
  &=\left\Vert \sum_{k_1+\cdots+k_m=3}^{\infty} \dfrac{\left[(A_1^{k_1}\circ A_2^{k_2})\circ \cdots\right]\circ A_m^{k_m} }{n^{k_1+k_2+\cdots+k_m}\cdot k_1!\cdot k_2!\cdots k_m!}\right\Vert \nonumber \\ 
		&\leq \sum_{k_1+\cdots+k_m=3}^{\infty} \dfrac{\Vert A_1\Vert^{k_1} \Vert A_2\Vert^{k_2}\cdots \Vert A_m \Vert ^{k_m} }{n^{k_1+k_2+\cdots+k_m}\cdot k_1!\cdot k_2!\cdots k_m!} \nonumber \\ 
		&= \exp \left(\frac{1}{n}\sum_{j=1}^m \Vert A_j\Vert \right)-\left(I+\dfrac{\sum_{j=1}^m \Vert A_j\Vert}{n}+\dfrac{\left(\sum_{j=1}^m \Vert A_j\Vert\right)^2}{2n^2}\right) \nonumber \\ 
		&\leq \frac{1}{3!\cdot n^3}\left(\sum_{j=1}^m \Vert A_j\Vert \right)^3\exp \left(\frac{1}{n}\sum_{j=1}^m \Vert A_j\Vert \right) \label{TaylorD}
	\end{align}
	where the inequalities (\ref{TaylorC}) and (\ref{TaylorD}) follow from \cite[Theorem 1]{Suzuki1976}.
	
	Combining (\ref{GHnbound3}), (\ref{TaylorC}) and (\ref{TaylorD}), we get desired result.    
\end{proof}

As a corollary, we have the following generalized Lie-Trotter formula for an 
arbitrary number of elements in a JB-algebra. 

\begin{corollary}
	For any finite number of elements $A_1, A_2,\cdots, A_m$ in a JB-algebra $\Al,$
	\begin{align*}
		\lim\limits_{n\to \infty}\left\{\left[\left(\exp\left(\frac{A_1}{n}\right)\circ \exp\left(\frac{A_2}{n}\right)\right)\circ\cdots \right]\circ \exp\left(\frac{A_m}{n}\right)\right\}^n=\exp\left(\sum_{j=1}^m A_j\right).
	\end{align*}
	If $\displaystyle m=2,$ then it is reduced to the following Lie-Trotter formula for JB-algebra:
	\begin{align*}
		\exp(A+B)=\lim\limits_{n\to \infty}\left(\exp\left(\frac{A}{n}\right)\circ \exp\left(\frac{B}{n}\right)\right)^n	
	\end{align*}
	for any elements $A, B$ in $\Al.$	
\end{corollary}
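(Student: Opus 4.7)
The plan is to deduce this corollary as an immediate consequence of the quantitative Suzuki estimate in Theorem \ref{Tjsuzuki1}. If $\Al$ is non-unital, I would first pass to its unitization $\widetilde{\Al}$ so that $\exp(A_j/n)$ makes sense via the functional calculus in a unital JB-algebra; the limit statement takes place inside $\widetilde{\Al}$ and is independent of which unital enlargement one chooses.

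Next, I would set $M := \sum_{j=1}^m \|A_j\|$ and invoke Theorem \ref{Tjsuzuki1} to obtain
$$\left\Vert \exp\left(\sum_{j=1}^m A_j\right) - g_n(\{A_j\}) \right\Vert \leq \frac{M^3\, e^{M}}{3 n^2}.$$
The right-hand side is a constant (independent of $n$) divided by $n^2$, hence it tends to $0$ as $n \to \infty$. Since $g_n(\{A_j\})$ is by definition the left-nested Jordan product appearing in the statement, the first displayed limit follows.

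For the special case $m=2$, the left-nested product degenerates: $g_n(\{A_1,A_2\}) = \bigl(\exp(A_1/n) \circ \exp(A_2/n)\bigr)^n$, so substituting $A_1 = A$ and $A_2 = B$ yields the claimed Lie-Trotter formula for two elements in a JB-algebra.

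There is essentially no obstacle, since the difficult work (the Shirshov-Cohen embedding into $B(\Hl)_{sa}$ together with the Taylor polynomial comparison against $F$) has already been carried out in the proof of Theorem \ref{Tjsuzuki1}. The only fine point worth noting is that the bound $M^3 e^{M}/(3n^2)$ depends only on the norms $\|A_j\|$ and not on $n$, so the convergence is in fact quadratic in $1/n$ rather than merely qualitative; the corollary records only the qualitative shadow of this estimate.
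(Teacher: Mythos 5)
Your proposal is correct and is exactly the intended argument: the corollary is an immediate consequence of Theorem \ref{Tjsuzuki1}, since the error bound $\frac{1}{3n^2}M^3 e^{M}$ is independent of $n$ except for the factor $n^{-2}$ and hence tends to zero, and the $m=2$ case is just the specialization of $g_n$. The paper states the corollary without proof for precisely this reason, and your extra remark about passing to the unitization is a harmless (and reasonable) refinement.
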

The following result generalizes the Suzuki symmetrized approximations for the exponentiated sums 
in \cite[Formula 3, Equation (1.15)\footnote{Note that in Suzuki's original paper, there is a typo. The power should be a 3 instead of a 2. See Formula 2 (Equation 1.10) there for the correct powers.}]{Suzuki1985} to JB-algebras with different connotations. 

\begin{theorem}\label{Tjsuzuki2}
	For any finite number of elements $A_1, A_2,\cdots, A_m$ in a JB-algebra $\Al,$
	\begin{enumerate}
		\item[(i)] $\displaystyle \left\Vert \exp \left( \sum_{j=1}^m A_j\right)-f_n(\{A_j\})\right \Vert	
		\leq
		\dfrac{3^{m-1}+1}{6n^2}\left( \sum_{j=1}^m \Vert A_j\Vert\right)^3\exp\left(\sum_{j=1}^m \Vert A_j\Vert\right)$ \vspace{.5cm}
		\item[(ii)] 
		$\displaystyle \left\Vert \exp \left( \sum_{j=1}^m A_j\right)-f_n(\{A_j\})\right \Vert	
		\leq
		\dfrac{2\cdot 3^m}{n}\left( \sum_{j=1}^m \Vert A_j\Vert\right)^2\exp\left(\frac{n+2}{n}\sum_{j=1}^m \Vert A_j\Vert\right)$
	\end{enumerate}
\end{theorem}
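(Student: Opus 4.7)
The approach mirrors the proof of Theorem \ref{Tjsuzuki1}, with the extra complication that $f_n(\{A_j\})$ involves nested Jordan triple products. Let $C = \exp(\frac{1}{n}\sum_{j=1}^m A_j)$ and let $D$ denote the inner nested Jordan triple product, so $f_n(\{A_j\}) = D^n$. The plan is to apply Shirshov-Cohen to the JB-subalgebra $\Bl$ generated by $C$, $D$, and $I$, producing an isometric Jordan embedding $\pi: \Bl \to B(\Hl)_{sa}$. The crucial observation is that in the special setting $B(\Hl)_{sa}$, the Jordan triple product coincides with the associative triple product $\{XYX\} = XYX$, so $\pi(D)$ is the honest associative symmetric product
\begin{equation*}
\exp(A_m/(2n))\cdots\exp(A_2/(2n))\exp(A_1/n)\exp(A_2/(2n))\cdots\exp(A_m/(2n)),
\end{equation*}
with $\|D\| = \|\pi(D)\| \leq \exp(\frac{1}{n}\sum \|A_j\|)$. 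The telescoping argument from Theorem \ref{Tjsuzuki1} then yields $\|C^n - D^n\| \leq n\,\|C - D\|\,\exp(\frac{n-1}{n}\sum \|A_j\|)$, reducing everything to estimating $\|C - D\|$.

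For part (i), I would bound $\|C - D\| \leq \|C - F\| + \|D - F\|$ with $F = I + T + T^2/2$ and $T = \frac{1}{n}\sum A_j$. The estimate $\|C - F\| \leq \frac{1}{6n^3}(\sum \|A_j\|)^3 \exp(\frac{1}{n}\sum \|A_j\|)$ is exactly the one used in Theorem \ref{Tjsuzuki1} and accounts for the $+1$ in the final constant. For $\|D - F\|$, define $E_1 = \exp(A_1/n)$ and recursively $E_k = \exp(\beta_k) E_{k-1} \exp(\beta_k)$ with $\beta_k = A_k/(2n)$, so that $\pi(D) = E_m$. The key algebraic step is to show by induction that the degree-$\leq 2$ Taylor polynomial of $E_k$ equals $F_k = I + T_k + T_k^2/2$ with $T_k = \sum_{j \leq k} A_j/n$; this follows because the symmetric sandwich contributes exactly $2\beta_k^2 + \beta_k T_{k-1} + T_{k-1}\beta_k$ at degree $2$, matching $T_k^2/2 - T_{k-1}^2/2$. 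The degree-$\geq 3$ remainder is then handled via the decomposition
\begin{equation*}
E_k - F_k = \bigl(\exp(\beta_k) F_{k-1}\exp(\beta_k) - F_k\bigr) + \exp(\beta_k)(E_{k-1} - F_{k-1})\exp(\beta_k),
\end{equation*}
where the first summand is bounded by splitting the degree-$\geq 3$ portion of the sandwich into three middle-factor cases ($I$, $T_{k-1}$, $T_{k-1}^2/2$), and the second inherits a factor $\exp(\|A_k\|/n)$ from the outer sandwich. Accumulating across the $m$ nesting levels produces the $3^{m-1}$ contribution.

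For part (ii), I would use only the weaker degree-$1$ Taylor match: both $C$ and $\pi(D)$ have linear term $T$, so $\|C - D\|$ is $O(1/n^2)$. Expanding $\pi(D)$ directly as a product of $2m-1$ exponentials and bounding all degree-$\geq 2$ contributions -- tracking norms multiplicatively and summing the combinatorial terms arising from the three factors at each nesting level -- yields a bound $\|D - (I+T)\| \leq c_m (\sum \|A_j\|)^2/n^2 \cdot \exp(\cdots)$ with $c_m$ of order $3^m$. Combined with the telescoping factor $n$ and the norm bound on $C$, this gives the $O(1/n)$ estimate with the stated constant. The principal obstacle in both parts is the explicit constant-tracking: verifying the degree-$2$ Taylor cancellation for all $k$ in (i), and carrying out the induction so that the cumulative constants are exactly $3^{m-1}+1$ and $2\cdot 3^m$ rather than coarser bounds, requires precise bookkeeping of which Taylor terms in the sandwich expansion contribute at each order.
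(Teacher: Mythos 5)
There is a genuine gap, and it sits at the heart of your strategy for both parts: the claim that $\pi(D)$ is the honest associative product $\exp(A_m/2n)\cdots\exp(A_1/n)\cdots\exp(A_m/2n)$. The Shirshov--Cohen theorem gives an isometric embedding into $B(\Hl)_{sa}$ only for the JB-subalgebra generated by \emph{two} elements and the identity (here $C$, $D$, $I$); it fails for three or more generators (the exceptional algebra $H_3(\mathbb{O})$ is generated by three elements). The individual exponentials $\exp(A_j/2n)$ do not lie in $\Bl$, so $\pi$ is not even defined on them, and you cannot unfold the nested triple product into an associative word. The embedding legitimately gives you only the telescoping step $\Vert C^n-D^n\Vert\leq n\Vert C-D\Vert(\max\{\Vert C\Vert,\Vert D\Vert\})^{n-1}$, which is all the paper uses it for. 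A telling symptom: if your associative expansion of $D$ were available, your argument would deliver the constants $\frac{1}{3n^2}$ and $\frac{2}{n}$ of the classical Suzuki bounds (exactly what the paper's Remark says happens when $\Al$ \emph{is} special), not the $\frac{3^{m-1}+1}{6n^2}$ and $\frac{2\cdot 3^m}{n}$ of the statement. The factors $3^{m-1}$ and $3^m$ are precisely the price of non-associativity, and your proposal erases them.

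The paper instead works intrinsically: it expands each nesting level via the Jordan identity $\{ABA\}=2(A\circ B)\circ A-A^2\circ B$, so $H_k$ is a signed sum of three terms built from Jordan products of exponentials. Estimating the degree-$p$ Taylor terms by the triangle inequality then multiplies the bound by $3$ per level, yielding $3^{m-1}$ in (i); your degree-$2$ cancellation computation is correct and can be carried out intrinsically the same way, but your bound on the degree-$\geq 3$ remainder must be redone with this three-term decomposition rather than the associative sandwich. For (ii) the paper's route is different again: it writes $G-H=U_{\exp(A_m/2n)}\cdots U_{\exp(A_1/2n)}\bigl(U(G)-1\bigr)$ using the invertibility of the quadratic maps $U_{\exp(\cdot)}$, bounds the outer composition by $\exp\bigl(\frac{1}{n}\sum\Vert A_j\Vert\bigr)$, and then applies the degree-$1$ Taylor match to $U(G)-1$, picking up $3^m$ from the $m$ triple-product expansions. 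Your part (ii), as written, again presupposes the unavailable associative expansion of $D$.
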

\begin{proof}
	Let
	\begin{align*}
		G&=\exp\left(\frac{1}{n}\sum_{j=1}^m A_j\right),\\
		H&=\left\{\exp\left(\frac{A_m}{2n}\right)\cdots\left\{\exp\left(\frac{A_2}{2n}\right)\exp\left(\frac{A_1}{n}\right)\exp\left(\frac{A_2}{2n}\right) \right\}\cdots	\exp\left(\frac{A_m}{2n}\right)\right\}.
	\end{align*}
	Then
	\begin{align*}
		\Vert G\Vert&\leq\exp \left(\frac{1}{n}\sum_{j=1}^m \Vert A_j\Vert\right), \nonumber \\ 
		\Vert H\Vert&\leq \exp\left(2\cdot\frac{\Vert A_m\Vert}{2n}\right)\cdots \exp\left(2\cdot \frac{\Vert A_2\Vert}{2n}\right)\exp\left(\frac{\Vert A_1\Vert}{n}\right)=\exp \left(\frac{1}{n}\sum_{j=1}^m \Vert A_j\Vert\right).
	\end{align*}
	
	Just as in the proof of Theorem \ref{Tjsuzuki1}
	\begin{align}
		t&=\left\Vert \exp \left( \sum_{j=1}^m A_j\right)-f_n(\{A_j\})\right \Vert	=\left\Vert G^n-H^n \right\Vert \nonumber \\
		&\leq\Vert G-H\Vert\left(\Vert G\Vert^{n-1}+\Vert G\Vert^{n-2}\Vert H\Vert+\cdots+\Vert H\Vert^{n-1}\right)\nonumber \\
		&\leq  n \Vert G-H\Vert\left(\max\{\Vert G\Vert, \Vert H\Vert\}  \right)^{n-1} \nonumber \\
		&\leq n \Vert G-H\Vert \exp \left(\frac{n-1}{n}\sum_{j=1}^m \Vert A_j\Vert\right)  \\
        & \leq n(\Vert G-F\Vert + \Vert H-F\Vert)\exp \left(\frac{n-1}{n}\sum_{j=1}^m \Vert A_j\Vert\right) \label{GHnbound}
	\end{align}
	
	Now we move to the proof of (i): for any $\displaystyle 2\leq k\leq m,$ recall
	\begin{align*}
		F_k=	I+\left(\frac{A_1}{n}+\frac{A_2}{n}+\cdots+\frac{A_k}{n}\right)+\dfrac{(A_1/n+A_2/n+\cdots+A_k/n)^2}{2!}.
	\end{align*}
	We claim that the degree 2 Taylor polynomials of $\displaystyle G$ and $\displaystyle H$ are $F_m,$ which is equal to $F$ defined in (\ref{NotionF}). It is trivial for $\displaystyle G$. For any $2\leq k\leq m,$ let
	\begin{align*}
		H_k=\left\{\exp\left(\frac{A_k}{2n}\right)\cdots\left\{\exp\left(\frac{A_2}{2n}\right)\exp\left(\frac{A_1}{n}\right)\exp\left(\frac{A_2}{2n}\right) \right\}\cdots	\exp\left(\frac{A_k}{2n}\right)\right\}.	
	\end{align*}
	A direct computation shows that 
	the degree 2 Taylor polynomial of $\displaystyle H_2$ is exactly
	the degree 2 Taylor polynomial of 
	\begin{align*}
		\left\{\left(I+\frac{A_2}{2n}+\frac{1}{2!}\left(\frac{A_2}{2n}\right)^2\right)\left(I+\frac{A_1}{n}+\frac{1}{2!}\left(\frac{A_1}{n}\right)^2\right)\left(I+\frac{A_2}{2n}+\frac{1}{2!}\left(\frac{A_2}{2n}\right)^2\right)\right\},	
	\end{align*} 
	which is
	\begin{align*}
		F_2=I+\left(\frac{A_1}{n}+\frac{A_2}{n}\right)+	\frac{1}{2!}\left(\frac{A_1}{n}+\frac{A_2}{n}\right)^2 .
	\end{align*}
	Similarly, the degree 2 Taylor polynomial of $\displaystyle H_3$ is exactly
	the degree 2 Taylor polynomial of 
	\begin{align*}
		\left\{\left(I+\frac{A_3}{2n}+\dfrac{\left(\frac{A_3}{2n}\right)^2}{2!}\right)F_2\left(I+\frac{A_3}{2n}+\dfrac{\left(\frac{A_3}{2n}\right)^2}{2!}\right)\right\}
	\end{align*} 
	which is
	\begin{align*}
		F_3=I+\left(\frac{A_1}{n}+\frac{A_2}{n}+\frac{A_3}{n}\right)+	\frac{1}{2!}\left(\frac{A_1}{n}+\frac{A_2}{n}+\frac{A_3}{n}\right)^2
	\end{align*}
	By induction, the degree 2 Taylor polynomial of $\displaystyle H$ is $F.$ 
	
	By (\ref{TaylorC}), 
	\begin{align}
		\left\Vert G-F\right \Vert\leq \frac{1}{3!\cdot n^3}\left(\sum_{j=1}^m \Vert A_j\Vert \right)^3\exp \left(\frac{1}{n}\sum_{j=1}^m \Vert A_j\Vert \right) \label{TaylorG},
	\end{align}
	since $\displaystyle G=C,$ which is defined (\ref{NotionC}).
	
	Since
	\begin{align*}
		H_2=2\left(\exp\left(\frac{A_2}{2n}\right)\circ \exp\left(\frac{A_1}{n}\right)\right)\circ \exp\left(\frac{A_2}{2n}\right)-\exp\left(\frac{A_2}{n}\right)\circ \exp\left(\frac{A_1}{n}\right),	
	\end{align*}
	it follows that for any integer $p>2,$ the norm of the sum of all terms of degree $p$ of Taylor expansion of $H_2\leq$ the sum of all terms of degree $p$ of Taylor expansion of \\ 
	$\displaystyle 3\cdot \exp\left(\Vert A_1\Vert/n\right)\exp\left(\Vert A_2\Vert/n\right).$
	
	Therefore, 
	\begin{align*}
		\left\Vert H_2-F_2\right\Vert&=\left\Vert \sum_{p>2}\left(\mbox{sum of all terms of degree $p$ of Taylor expansion of}\, H_2\right) \right\Vert \\
		&\leq 3\left[\exp \left(\frac{1}{n}\sum_{j=1}^2 \Vert A_j\Vert \right)-\left(I+\frac{1}{n}\sum_{j=1}^2 \Vert A_j\Vert+\frac{1}{2n^2}\left(\sum_{j=1}^2 \Vert A_j\Vert\right)^2 \right) \right]	
	\end{align*}
	which follows from the fact that the degree 2 Taylor polynomial of $\displaystyle H_2-F_2$ is zero.
	
	Note that 
	\begin{align*}
		H_3=2\left(\exp\left(\frac{A_3}{2n}\right)\circ H_2\right)\circ \exp\left(\frac{A_3}{2n}\right)-\exp\left(\frac{A_3}{n}\right)\circ H_2.	
	\end{align*}
	For any positive integer $p>2,$ the norm of the sum of all terms of degree $p$ of Taylor expansion of $H_3\leq$ the sum of all terms of degree $p$ of Taylor expansion of $\displaystyle 3\cdot 3\cdot \exp\left(\Vert A_1\Vert/n\right)\exp\left(\Vert A_2\Vert/n\right)\exp\left(\Vert A_3\Vert/n\right).$
	
	Therefore, 
	\begin{align*}
		\left\Vert H_3-F_3\right\Vert&= \left\Vert\sum_{p>2}\left(\mbox{sum of all terms of degree $p$ of Taylor expansion of}\, H_3\right) \right\Vert \\
		&\leq 3^2\left[\exp \left(\frac{1}{n}\sum_{j=1}^3 \Vert A_j\Vert \right)-\left(I+\frac{1}{n}\sum_{j=1}^3 \Vert A_j\Vert+\frac{1}{2n^2}\left(\sum_{j=1}^3 \Vert A_j\Vert\right)^2 \right) \right]	
	\end{align*}
	which follows from the fact that the degree 2 Taylor polynomial of $\displaystyle H_3-F_3$ is zero.
	
	By induction, 
	\begin{align}
		\left\Vert H_m-F_m\right\Vert &=\left\Vert H-F\right \Vert \nonumber \\
		&\leq 3^{m-1}\left[\exp \left(\frac{1}{n}\sum_{j=1}^m \Vert A_j\Vert \right)-\left(I+\frac{1}{n}\sum_{j=1}^m \Vert A_j\Vert+\frac{1}{2n^2}\left(\sum_{j=1}^m \Vert A_j\Vert\right)^2 \right) \right]\nonumber \\
		&\leq 3^{m-1} \frac{1}{3!\cdot n^3}\left(\sum_{j=1}^m \Vert A_j\Vert \right)^3\exp \left(\frac{1}{n}\sum_{j=1}^m \Vert A_j\Vert \right) \label{TaylorH}
	\end{align}
	where the inequality (\ref{TaylorH}) follows from \cite[Theorem 1]{Suzuki1976}.

	
	From (\ref{GHnbound}),(\ref{TaylorG}) and (\ref{TaylorH}) we obtain 
	\begin{align*}
		\left\Vert \exp \left( \sum_{j=1}^m A_j\right)-f_n(\{A_j\})\right \Vert	
		\leq
		\dfrac{3^{m-1}+1}{6n^2}\left( \sum_{j=1}^m \Vert A_j\Vert\right)^3\exp\left(\sum_{j=1}^m \Vert A_j\Vert\right).
	\end{align*}

	We turn to the proof of (ii):
	\begin{align}
		\Vert G-H \Vert&=\left\Vert G-U_{\exp(\frac{A_m}{2n})}U_{\exp(\frac{A_{m-1}}{2n})}\cdots U_{\exp(\frac{A_2}{2n})}U_{\exp(\frac{A_1}{2n})}\left(1\right)\right \Vert \nonumber \\ 
		&=\left\Vert U_{\exp(\frac{A_m}{2n})}U_{\exp(\frac{A_{m-1}}{2n})}\cdots U_{\exp(\frac{A_2}{2n})}U_{\exp(\frac{A_1}{2n})}\left(U(G)-1\right)\right \Vert \nonumber \\
		&\leq  \exp \left(\frac{1}{n}\sum_{j=1}^m \Vert A_j\Vert\right)\cdot \left\Vert \left(U(G)-1\right)\right \Vert , \label{GHbound}
	\end{align}
	where $\displaystyle U=U_{\exp\left(-\frac{A_1}{2n}\right)}U_{\exp\left(-\frac{A_{2}}{2n}\right)}\cdots U_{\exp\left(-\frac{A_m}{2n}\right)}.$ 
	
	Following similar argument as in (i) and using the fact that the Taylor polynomial of degree 1 of $\displaystyle U(G)-I$ is zero,
	\begin{align} \label{UGbound}
		\left\Vert U(G)-1\right \Vert &\leq 3^m\exp \left(\frac{2}{n}\sum_{j=1}^m \Vert A_j\Vert \right)-\left(1+\frac{2}{n}\sum_{j=1}^m \Vert A_j\Vert \right) \\ 
		&\leq 3^m \frac{2}{n^2}\left(\sum_{j=1}^m \Vert A_j\Vert \right)^2\exp \left(\frac{2}{n}\sum_{j=1}^m \Vert A_j\Vert \right) \label{Taylor}          
	\end{align}
	where the inequality (\ref{Taylor}) follows from \cite[Theorem 1]{Suzuki1976}.
	
	Combining (\ref{GHnbound}), (\ref{GHbound}) and (\ref{Taylor})  
	\begin{align*}
		\left\Vert \exp \left( \sum_{j=1}^m A_j\right)-f_n(\{A_j\})\right \Vert	
		\leq
		\dfrac{2\cdot 3^m}{n}\left( \sum_{j=1}^m \Vert A_j\Vert\right)^2\exp\left(\frac{n+2}{n}\sum_{j=1}^m \Vert A_j\Vert\right).
	\end{align*}
	
\end{proof}
\begin{remark}
If $\Al$ is a Jordan subalgebra of $B(H)_{\rm sa},$ then the inequalities in Theorem \ref{Tjsuzuki2} are reduced to the classical Suzuki estimates \cite{Suzuki1976, Suzuki1985}:
\begin{enumerate}
	\item[(i)] $\displaystyle \left\Vert \exp \left( \sum_{j=1}^m A_j\right)-f_n(\{A_j\})\right \Vert	
	\leq
	\dfrac{1}{3n^2}\left( \sum_{j=1}^m \Vert A_j\Vert\right)^3\exp\left(\sum_{j=1}^m \Vert A_j\Vert\right),$ \vspace{0.5cm}
	\item[(ii)] 
	$\displaystyle \left\Vert \exp \left( \sum_{j=1}^m A_j\right)-f_n(\{A_j\})\right \Vert	
	\leq
	\dfrac{2}{n}\left( \sum_{j=1}^m \Vert A_j\Vert\right)^2\exp\left(\frac{n+2}{n}\sum_{j=1}^m \Vert A_j\Vert\right).$
\end{enumerate} 	
\end{remark}



As a corollary, we obtain the following generalization of 
symmetrized Lie-Trotter formula for an 
arbitrary number of elements in a JB-algebra. 

\begin{corollary}
	For any finite number of elements $A_1, A_2,\cdots, A_m$ in a JB-algebra $\Al,$	
	\begin{align*}
		\lim\limits_{n\to \infty}f_n\left(\{A_j\}\right)=\exp\left(\sum_{j=1}^m A_j\right).	
	\end{align*}
\end{corollary}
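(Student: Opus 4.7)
The plan is to deduce this directly from Theorem \ref{Tjsuzuki2}, since the corollary is essentially the qualitative content of the quantitative estimates established there. First I would fix the elements $A_1,\dots,A_m\in\Al$ and set $M=\sum_{j=1}^m\Vert A_j\Vert$, which is a finite constant independent of $n$.

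Next, I would invoke inequality (i) of Theorem \ref{Tjsuzuki2} to obtain
\begin{align*}
\left\Vert \exp\!\left(\sum_{j=1}^m A_j\right)-f_n(\{A_j\})\right\Vert
\;\leq\;\frac{3^{m-1}+1}{6n^2}\,M^3\,e^{M}.
\end{align*}
The right-hand side is of the form $C(m,M)/n^2$ with $C(m,M)$ independent of $n$, so it tends to $0$ as $n\to\infty$. Hence $f_n(\{A_j\})\to\exp\!\left(\sum_{j=1}^m A_j\right)$ in norm, which is exactly the claimed symmetrized Lie-Trotter formula. (Alternatively, inequality (ii) also suffices, since its right-hand side $\frac{2\cdot 3^m}{n}M^2\exp\!\left(\frac{n+2}{n}M\right)$ behaves like $\frac{2\cdot 3^m M^2 e^{M}}{n}$ asymptotically, which also vanishes.)

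There is no real obstacle here: the entire difficulty has been absorbed into the proof of Theorem \ref{Tjsuzuki2}, and the corollary is a one-line consequence of letting $n\to\infty$ in the estimate. The only thing worth being careful about is verifying that the constants appearing on the right-hand side depend on $m$ and the norms $\Vert A_j\Vert$ but not on $n$, which is transparent from the explicit form of the bound.
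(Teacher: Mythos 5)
Your proposal is correct and matches the paper's intended argument: the corollary is stated as an immediate consequence of Theorem \ref{Tjsuzuki2}, and letting $n\to\infty$ in either estimate (i) or (ii), whose right-hand sides depend on $n$ only through the factors $1/n^2$ or $1/n$ (up to a bounded exponential), gives the convergence at once.
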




\section{Generalized Trotter formulae in Jordan-Banach algebras}\label{section: general}
Motivated by the formidable Theorem 2.1 in \cite{Peralta2023}, we establish three generalized Lie-Trotter formulas for an arbitrary finite number of elements in a Jordan-Banach algebra. 
\begin{proposition}	\label{PJBLie1}
	Let $A_1, A_2,\cdots, A_m$ be elements in  a unital Jordan-Banach algebra $\Al.$ Then,	\begin{align}\label{STformula2}
		\lim\limits_{n\to \infty}g_n\left(\{A_j\}\right)=\exp\left(\sum_{j=1}^m A_j\right).
	\end{align}
\end{proposition}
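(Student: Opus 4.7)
The plan is to mirror the strategy of the proof of Theorem \ref{Tjsuzuki1}, with one essential modification: the Shirshov--Cohen embedding of the Jordan subalgebra generated by two elements into $B(\Hl)_{sa}$ is not available for an arbitrary Jordan-Banach algebra, so the consequent inequality $\|C^n - D^n\| \leq n \max\{\|C\|,\|D\|\}^{n-1}\|C-D\|$ must be obtained by other means. I will derive it directly from the power-associativity of Jordan algebras.

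Set $C = \exp\bigl(\tfrac{1}{n}\sum_{j=1}^m A_j\bigr)$ and let $D$ be the nested Jordan product $[(\exp(A_1/n) \circ \exp(A_2/n)) \circ \cdots] \circ \exp(A_m/n)$, so that $C^n = \exp(\sum_j A_j)$ (by power-associativity in the subalgebra generated by $\sum_j A_j$) and $D^n = g_n(\{A_j\})$. The first step is the identity
\[
C^n - D^n = C \circ (C^{n-1} - D^{n-1}) + (C - D) \circ D^{n-1},
\]
valid because $X^k = X^{k-1} \circ X$ for every $X$ in a Jordan algebra. An induction on $n$ combined with $\|X \circ Y\| \leq \|X\|\|Y\|$ then yields
\[
\|C^n - D^n\| \leq n \,(\max\{\|C\|, \|D\|\})^{n-1}\,\|C - D\|,
\]
which substitutes for the telescoping estimate obtained via the representation $\pi$ in Theorem \ref{Tjsuzuki1}.

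The remaining ingredients carry over verbatim from Theorem \ref{Tjsuzuki1}, since none of them actually used Shirshov--Cohen. Both $\|C\|$ and $\|D\|$ are at most $\exp\bigl(\tfrac{1}{n}\sum \|A_j\|\bigr)$, so $\max\{\|C\|,\|D\|\}^{n-1} \leq \exp\bigl(\sum \|A_j\|\bigr)$ uniformly in $n$. The induction on $m$ already carried out in Theorem \ref{Tjsuzuki1} shows that the degree-$2$ Taylor polynomial (in $1/n$) of each of $C$ and $D$ is the common element $F = I + \tfrac{1}{n}\sum A_j + \tfrac{1}{2n^2}\bigl(\sum A_j\bigr)^2$, and the tail bounds $\|C - F\|,\ \|D - F\| \leq \tfrac{1}{6n^3}\bigl(\sum \|A_j\|\bigr)^3 \exp\bigl(\tfrac{1}{n}\sum \|A_j\|\bigr)$ are obtained by exactly the same termwise majorization used there (the Jordan expansion of $D$ is dominated by the associative expression $\exp\bigl(\tfrac{1}{n}\sum \|A_j\|\bigr)$ using only submultiplicativity of the norm). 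Hence $\|C - D\| = O(1/n^3)$, and combining the estimates gives $\|C^n - D^n\| = O(1/n^2) \to 0$, which is the desired conclusion.

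The main obstacle is the first step: in a non-associative algebra the telescoping identity $a^n - b^n = \sum_k a^{n-1-k}(a-b) b^k$ is not available as is. The resolution above is to split off only a single factor at each step of the recursion, so that only the power-associativity of each individual element (never any associativity between $C$ and $D$) is invoked. This is the one place where the passage from the JB-algebra setting of Theorem \ref{Tjsuzuki1} to the general Jordan-Banach setting requires a genuinely new argument; everything else is essentially a translation of the JB-algebra proof.
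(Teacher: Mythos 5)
Your proof is correct, but it follows a genuinely different route from the paper. The paper's proof of Proposition \ref{PJBLie1} is qualitative: it introduces the holomorphic map $g_m(z)=\left[\left(\exp(zA_1)\circ\exp(zA_2)\right)\circ\cdots\right]\circ\exp(zA_m)$, computes $g_m'(0)=\sum_{j=1}^m A_j$ by induction, and invokes the abstract Lie--Trotter theorem of Escolano--Peralta--Villena \cite[Theorem 2.1]{Peralta2023} with $\lambda_n=1/n$, $\mu_n=n$; this is short but produces no convergence rate. You instead rerun the quantitative argument of Theorem \ref{Tjsuzuki1}, and your key observation --- that the telescoping bound
\begin{align*}
\Vert C^n-D^n\Vert\leq n\left(\max\{\Vert C\Vert,\Vert D\Vert\}\right)^{n-1}\Vert C-D\Vert
\end{align*}
follows from the identity $C^n-D^n=C\circ\left(C^{n-1}-D^{n-1}\right)+(C-D)\circ D^{n-1}$, which uses only bilinearity of $\circ$, the recursion $X^k=X\circ X^{k-1}$, and submultiplicativity of the norm --- is sound and correctly replaces the Shirshov--Cohen embedding, which is indeed unavailable outside the JB-algebra setting. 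The remaining ingredients (the norm bounds on $C$ and $D$, the identification of the common degree-$2$ Taylor polynomial $F$, and the termwise majorization of the tails) use only the Banach-algebra norm inequality and power-associativity, so they do transfer verbatim. What your approach buys is substantial: it yields the explicit $O(1/n^2)$ Suzuki estimate of Theorem \ref{Tjsuzuki1} with the same constant $\frac{1}{3n^2}\left(\sum_{j=1}^m\Vert A_j\Vert\right)^3\exp\left(\sum_{j=1}^m\Vert A_j\Vert\right)$ in an arbitrary unital Jordan--Banach algebra, not merely the limit statement --- which, if you write it out carefully, answers affirmatively the question the authors raise in the introduction about whether Suzuki's error estimates survive in general Jordan--Banach algebras. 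The one point worth stating explicitly rather than in passing is that $C^n=\exp\left(\sum_j A_j\right)$ because the closed subalgebra generated by the single element $\sum_j A_j$ and the unit is associative, so the usual exponential law applies there.
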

\begin{proof}
	For $2\leq k\leq m,$ let
	\begin{align*}
		g_k(z)=\left[\left(\exp\left(z A_1 \right)\circ \exp\left(z A_2 \right)\right)\circ\cdots \right]\circ \exp\left(z A_k \right).
	\end{align*}
	Note that for any $\displaystyle g_k: \ball{0, r}\to \Al$ is a holomorphic map satisfying $g_k(0)=1$ for any $\displaystyle r>0$ and $ \displaystyle g_{k+1}(z)=g_k(z)\circ \exp\left(z A_k \right).$ It is easy to see that $\displaystyle g_2'(0)=A_1+A_2.$ Thus,
	\begin{align*}
		g_3'(0)=g_2'(0)\circ \exp\left(0\cdot A_3 \right)+g_2(0)\circ \left(A_3 \circ \exp\left(0 \cdot A_3 \right)\right)=A_1+A_2+A_3. 	
	\end{align*}
	By induction
	\begin{align*}
		g_k'(0)=A_1+A_2+\cdot+A_k.	
	\end{align*}
	Applying \cite[Theorem 2.1]{Peralta2023} with $\displaystyle \lambda_n=\frac{1}{n}$ and $\mu_n=n,$ \begin{align*}
		\lim\limits_{n\to \infty}\left(g_m\left(\frac{1}{n}\right)\right)^n=\exp \left( g_m'(0)\right),
	\end{align*}
	which is
	\begin{align*}
		\lim\limits_{n\to \infty}\left(\left[\left(\exp\left(\frac{A_1}{n}\right)\circ \exp\left(\frac{A_2}{n}\right)\right)\circ\cdots \right]\circ \exp\left(\frac{A_m}{n}\right)\right)^n=\exp\left(\sum_{j=1}^m A_j\right).
	\end{align*}
\end{proof}


The following proposition is another generalized Lie-Trotter formula.

\begin{proposition}
	For any finite number of elements $A_1, A_2,\cdots, A_{2m+1}$ in a unital Jordan-Banach algebra $\Al,$
	\begin{align}
		\exp \left( \sum_{j=1}^{2m+1} A_j\right)=\lim_{n\to \infty}h_n(\{A_j\})
	\end{align}
	where
	\begin{align*}
		h_n(\{A_j\})=\left\{\exp\left(\frac{A_{2m}}{n}\right)\cdots\left\{\exp\left(\frac{A_2}{n}\right)\exp\left(\frac{A_1}{n}\right)\exp\left(\frac{A_3}{n}\right) \right\}\cdots	\exp\left(\frac{A_{2m+1}}{n}\right)\right\}^n	
	\end{align*}	
\end{proposition}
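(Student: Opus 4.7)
The plan is to mimic the proof of Proposition~\ref{PJBLie1}: build a holomorphic function $h_m : \ball{0,r} \to \Al$ so that $h_n(\{A_j\}) = \bigl(h_m(1/n)\bigr)^n$, verify $h_m(0)=I$ and compute $h_m'(0)=\sum_{j=1}^{2m+1}A_j$, and then invoke \cite[Theorem 2.1]{Peralta2023} with $\lambda_n = 1/n$ and $\mu_n = n$.

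Concretely, for $1 \le k \le m$ I would define
\begin{align*}
h_1(z) &= \{\exp(zA_2)\exp(zA_1)\exp(zA_3)\}, \\
h_{k+1}(z) &= \{\exp(zA_{2k+2})\, h_k(z)\, \exp(zA_{2k+3})\}.
\end{align*}
Since $z \mapsto \exp(zA)$ is entire in any Jordan-Banach algebra and the Jordan triple product is a continuous trilinear operation (by (\ref{JI}) it is a polynomial combination of Jordan products), each $h_k$ is holomorphic on every ball around $0$, and by construction $h_n(\{A_j\}) = \bigl(h_m(1/n)\bigr)^n$.

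Next I would check inductively that $h_k(0) = I$ and compute $h_k'(0)$. Because $I \circ X = X$, the definition (\ref{JI}) immediately gives the unit identities $\{AII\} = A = \{IIA\}$ and $\{IBI\} = B$; hence $h_k(0) = I$ follows by a direct induction from $h_1(0) = \{III\} = I$. Trilinearity of $\{\,\cdot\,\cdot\,\cdot\,\}$ and the product rule then yield
\begin{align*}
h_{k+1}'(0) = \{A_{2k+2}\, I\, I\} + \{I\, h_k'(0)\, I\} + \{I\, I\, A_{2k+3}\} = A_{2k+2} + h_k'(0) + A_{2k+3},
\end{align*}
together with the base case $h_1'(0) = A_1 + A_2 + A_3$ obtained by the same expansion. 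An easy induction then delivers $h_m'(0) = \sum_{j=1}^{2m+1} A_j$.

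With $h_m$ a holomorphic $\Al$-valued map on a ball around $0$ satisfying $h_m(0) = I$, Theorem~2.1 of \cite{Peralta2023} applied with $\lambda_n = 1/n$ and $\mu_n = n$ gives
\begin{align*}
\lim_{n\to\infty} h_n(\{A_j\}) = \lim_{n\to\infty} \bigl(h_m(1/n)\bigr)^n = \exp\bigl(h_m'(0)\bigr) = \exp\!\left(\sum_{j=1}^{2m+1} A_j\right).
\end{align*}
The main obstacle is bookkeeping rather than anything deep: one must verify the unit identities for $\{\,\cdot\,\cdot\,\cdot\,\}$ in the non-associative setting and carefully track which indices appear on the left and right at each level of the recursion so that exactly $A_1, \dots, A_{2m+1}$ are produced (and no cross-derivative terms contribute at $z = 0$ since $\exp(0 \cdot A) = I$ and the triple product is trilinear). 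Everything else is a direct adaptation of the argument given for Proposition~\ref{PJBLie1}.
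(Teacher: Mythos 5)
Your proposal is correct and follows essentially the same route as the paper: define the holomorphic maps $h_k$ recursively via the triple product, verify $h_k(0)=I$ and $h_k'(0)=\sum_{j=1}^{2k+1}A_j$ by induction, and apply \cite[Theorem 2.1]{Peralta2023} with $\lambda_n=1/n$, $\mu_n=n$. The only cosmetic difference is that you compute $h_1'(0)$ directly from trilinearity and the unit identities for $\{\cdot\,\cdot\,\cdot\}$, whereas the paper expands the triple product into three Jordan-product terms and reuses the derivative computation from Proposition \ref{PJBLie1}; both are equally valid.
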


\begin{proof}
	For $1\leq k\leq m,$ denote
	\begin{align*}
		h_k(z)=\left\{\exp\left(z A_{2k}\right)\cdots\left\{\exp\left(z A_2\right)\exp\left(z A_1\right)\exp\left(z A_3 \right) \right\}\cdots	\exp\left(z A_{2k+1} \right)\right\}	
	\end{align*}
	For any $r>0,$ $f_k: \ball{0, r}\to \Al$ is a holomorphic map satisfying $h_k(0)=1.$ 
	
	Note that 
	\begin{align*}
		h_1'(z)=\frac{d}{dz}\left[(e^{z A_2}\circ e^{z A_1})\circ e^{z A_3}\right]+\frac{d}{dz}\left[(e^{z A_3}\circ e^{z A_1})\circ e^{z A_2}\right]-\frac{d}{dz}\left[(e^{z A_2}\circ e^{z A_3})\circ e^{z A_1}\right]	
	\end{align*}
	By the proof of Proposition \ref{PJBLie1},  
	\begin{align*}
		\frac{d}{dz}\vert_{z=0}\left[(e^{z A_2}\circ e^{z A_1})\circ e^{z A_3}\right]&=\frac{d}{dz}\vert_{z=0}\left[(e^{z A_3}\circ e^{z A_1})\circ e^{z A_2}\right]\\
		&=\frac{d}{dz}\vert_{z=0}\left[(e^{z A_2}\circ e^{z A_3})\circ e^{z A_1}\right]\\
		&=A_1+A_2+A_3.	
	\end{align*}
	Therefore, 
	\begin{align*}
		h_1'(0)=A_1+A_2+A_3.	
	\end{align*}
	Since $\displaystyle h_2(z)=\left\{e^{z A_4} h_1(z) e^{z A_5}\right\},$
	\begin{align*}
		h_2'(z)&=\left\{\dfrac{ d \left(e^{z A_4}\right)}{dz} h_1(z) e^{z A_5}\right\}+\left\{e^{z A_4} h_1'(z) e^{z A_5}\right\}+\left\{e^{z A_4} h_1'(z) \dfrac{d\left(e^{z A_5}\right)}{dz}\right\}.
	\end{align*}
	Thus, 
	\begin{align*}
		h_2'(0)=A_4+A_1+A_2+A_3+A_5.	
	\end{align*}
	By induction, 
	\begin{align*}
		h_k'(0)=\sum_{j=1}^{2k+1}A_j.	
	\end{align*}
	Applying \cite[Theorem 2.1]{Peralta2023} with $\displaystyle \lambda_n=\frac{1}{n}$ and $\mu_n=n,$ 
	\begin{align*}
		\lim\limits_{n\to \infty}\left(h_m\left(\frac{1}{n}\right)\right)^n=\exp \left( h_m'(0)\right).	
	\end{align*}
	i.e.
	\begin{align*}
		\lim\limits_{n\to \infty}h_n\left(\{A_j\}\right)=\exp\left(\sum_{j=1}^{2m+1} A_j\right).
	\end{align*}
\end{proof}
As a consequence, we get the following generalized Lie-Trotter formula for JB-algebras, which are over $\Rdb.$

\begin{corollary}
	For any finite number of elements $A_1, A_2,\cdots, A_{2m+1}$ in a JB-algebra $\Al,$
	\begin{align}
		\exp \left( \sum_{j=1}^{2m+1} A_j\right)=\lim_{n\to \infty}h_n(\{A_j\})
	\end{align}
\end{corollary}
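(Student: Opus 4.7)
The plan is to deduce the corollary directly from the preceding proposition via complexification, which is the standard device for transferring results about complex Jordan-Banach algebras to JB-algebras. Recall that every unital JB-algebra $\Al$ admits a canonical complexification $\Al_{\Cdb} := \Al \oplus i\Al$ that carries a unital complex Jordan-Banach algebra structure (in fact a JB*-algebra structure, by Wright's theorem) whose Jordan product, unit, and norm extend those of $\Al$. If $\Al$ is not unital, I would first pass to its standard unitization, which is again a JB-algebra containing $\Al$ as a closed Jordan subalgebra.

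First, I would observe that both $h_n(\{A_j\})$ and $\exp\bigl(\sum_{j=1}^{2m+1} A_j\bigr)$ are built using only the Jordan product and norm-convergent power series with real coefficients; in particular, the triple product $\{ABC\}$ is a polynomial in $\circ$. Consequently, evaluating these expressions inside $\Al_{\Cdb}$ produces elements that in fact lie in $\Al$, and the evaluation agrees with the one performed intrinsically in $\Al$.

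Next, I would apply the preceding proposition to the elements $A_1, A_2, \ldots, A_{2m+1}$ regarded as sitting inside the unital complex Jordan-Banach algebra $\Al_{\Cdb}$. This yields
\begin{align*}
\lim_{n \to \infty} h_n(\{A_j\}) = \exp\left(\sum_{j=1}^{2m+1} A_j\right)
\end{align*}
as an equality of norm limits in $\Al_{\Cdb}$. Because $\Al$ is closed in $\Al_{\Cdb}$ and both sides belong to $\Al$ by the previous step, the same limit holds in $\Al$, which is precisely the corollary.

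The only real point requiring attention is the existence of the complexification $\Al_{\Cdb}$ as a complex Jordan-Banach algebra with a norm extending the given JB-norm; this is classical and can simply be cited from \cite{Alfsen2003} or \cite{hanche1984jordan}. Once the reduction is in place, no computational obstacle remains, since the holomorphy arguments and the invocation of \cite[Theorem 2.1]{Peralta2023} were already carried out in the proposition. The same strategy in fact recovers the earlier corollary following Proposition~\ref{PJBLie1} and could be recorded as a general principle: any Lie-Trotter-type limit established in unital complex Jordan-Banach algebras, involving only $\circ$, triple products, and real-coefficient exponential series, automatically descends to unital JB-algebras by complexification.
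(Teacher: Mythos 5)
Your proposal is correct and follows essentially the same route as the paper: the paper's proof is exactly the observation that the complexification of $\Al$ is a JB*-algebra by Wright's theorem, hence a unital complex Jordan-Banach algebra to which the preceding proposition applies. Your additional remarks about the expressions staying inside the closed real subalgebra $\Al$ are a reasonable elaboration of details the paper leaves implicit.
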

\begin{proof}
	The complexification of $\Al$ is a JB*-algebra (see \cite{Wright1977}), thus a complex Jordan-Banach algebra.	
\end{proof}



The following result is a generalization of symmetrized Lie-Trotter formula for an arbitrary number of elements in a Jordan-Banach algebra.

\begin{corollary}
	Let $\Al$ be a unital Jordan-Banach algebra. Then the formula
	\begin{align}\label{STformula}
		\lim\limits_{n\to \infty}f_n\left(\{A_j\}\right)=\exp\left(\sum_{j=1}^m A_j\right)
	\end{align}
	holds for all $A_1, A_2,\cdots, A_m\in \Al$ with an arbitrary positive integer $m.$	
\end{corollary}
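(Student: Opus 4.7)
The plan is to mimic the strategy used in the previous proposition, namely to recast $f_n(\{A_j\})$ as the $n$-th power of a holomorphic curve evaluated at $1/n$, and then invoke \cite[Theorem 2.1]{Peralta2023} (with $\lambda_n=1/n$, $\mu_n=n$). For $2\leq k\leq m$, define
\begin{align*}
  f_k(z)=\left\{\exp\!\left(\tfrac{zA_k}{2}\right)\cdots \left\{\exp\!\left(\tfrac{zA_2}{2}\right)\exp(zA_1)\exp\!\left(\tfrac{zA_2}{2}\right)\right\}\cdots \exp\!\left(\tfrac{zA_k}{2}\right)\right\}.
\end{align*}
Each $f_k\colon \ball{0,r}\to \Al$ is holomorphic (it is a finite composition of compositions of the entire exponential with continuous bilinear/trilinear Jordan operations), and clearly $f_k(0)=I$. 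Moreover, unwinding the outermost bracket using the identity $\{ABA\}=2(A\circ B)\circ A-A^2\circ B$ (a specialization of \eqref{JI}) produces the key recursion
\begin{align*}
  f_k(z)=\bigl\{\exp\!\left(\tfrac{zA_k}{2}\right)\,f_{k-1}(z)\,\exp\!\left(\tfrac{zA_k}{2}\right)\bigr\},\qquad 3\leq k\leq m.
\end{align*}

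The main technical step is to show by induction that $f_k'(0)=A_1+A_2+\cdots+A_k$. For the base case $k=2$, the direct computation via \eqref{JI} gives $f_2(z)=2(\exp(zA_2/2)\circ\exp(zA_1))\circ\exp(zA_2/2)-\exp(zA_2)\circ\exp(zA_1)$, and differentiating at $z=0$ (using $\exp(0)=I$ and $I\circ X=X$) yields $f_2'(0)=2(A_2/2+A_1+A_2/2)-(A_2+A_1)=A_1+A_2$. For the inductive step, applying the same identity to $f_k(z)=\{\exp(zA_k/2)\,f_{k-1}(z)\,\exp(zA_k/2)\}$ and using $f_{k-1}(0)=I$ together with the Leibniz rule gives $f_k'(0)=A_k+f_{k-1}'(0)$, closing the induction.

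With this derivative in hand, \cite[Theorem 2.1]{Peralta2023} applied with $\lambda_n=1/n$ and $\mu_n=n$ yields
\begin{align*}
  \lim_{n\to\infty}\left(f_m\!\left(\tfrac{1}{n}\right)\right)^n=\exp\!\bigl(f_m'(0)\bigr)=\exp\!\left(\sum_{j=1}^m A_j\right),
\end{align*}
which is exactly \eqref{STformula} once we observe that $f_n(\{A_j\})=\bigl(f_m(1/n)\bigr)^n$ from the definition of $f_n(\{A_j\})$ preceding Theorem~\ref{Tjsuzuki1}. The only non-routine part is getting the derivative right, which requires a careful expansion of the nested $\{\cdot\,\cdot\,\cdot\}$ brackets; beyond that, the result is essentially immediate from \cite[Theorem 2.1]{Peralta2023} once the holomorphic family $f_k(z)$ is set up. No quantitative estimates are needed, so the non-associativity does not create a genuine obstruction here, in contrast to the situation of Theorems~\ref{Tjsuzuki1} and~\ref{Tjsuzuki2}.
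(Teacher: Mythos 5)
Your proof is correct and follows exactly the strategy the paper intends: the paper states this corollary without proof, and the natural argument is the one it already used for the two preceding propositions, namely setting up the holomorphic family, establishing the recursion $f_k(z)=\{\exp(zA_k/2)\,f_{k-1}(z)\,\exp(zA_k/2)\}$, computing $f_k'(0)=A_1+\cdots+A_k$ by induction, and invoking \cite[Theorem 2.1]{Peralta2023} with $\lambda_n=1/n$, $\mu_n=n$. Your derivative computation via $\{ABA\}=2(A\circ B)\circ A-A^2\circ B$ checks out, so there is nothing to add.
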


\bigskip
\noindent 
{\bf Acknowledgement:}\ S. C. acknowledges DOE ASCR funding under the Quantum Computing Application Teams program, FWP number ERKJ347. This manuscript has been authored in part by UT-Battelle, LLC, under Contract No. DE-AC0500OR22725 with the U.S. Department of Energy. The United States Government retains and the publisher, by accepting the article for publication, acknowledges that the United States Government retains a non-exclusive, paid-up, irrevocable, worldwide license to publish or reproduce the published form of this manuscript, or allow others to do so, for the United States Government purposes. The Department of Energy will provide public access to these results of federally sponsored research in accordance with the DOE Public Access Plan.

The authors would like to thank the referee for the careful reading of the manuscript and valuable comments.  

\bigskip
\noindent 
{\bf Declaration of competing interest:}\
The authors declare that they have no known competing financial interests or personal relationships that could have appeared to influence the work reported in this paper.

\bigskip
\noindent 
{\bf Data availability:} \
Data sharing is not applicable to this article as no new data were created or analyzed in this study.


\begin{thebibliography}{99} 
	
	\bibitem{Alfsen2003}
	E.~M.~Alfsen and F.~W.~Shultz,
	{\em Geometry of state spaces of operator algebras,}
	Mathematics: Theory \& Applications.
	Birkh{\" a}user Boston, Inc., Boston, MA, 2003. 
	
	
	
	
	\bibitem{hanche1984jordan}
	H. Hanche-Olsen and E. St{\o}rmer, 
	{\em Jordan operator algebras,} Monographs and Studies in Mathematics, {\bf 21}. 
	Pitman (Advanced Publishing Program), Boston, MA, 1984.
	
	\bibitem{jordan1993algebraic}
P. Jordan, J. von Neumann, and E. P. Wigner,
On an algebraic generalization of the quantum mechanical formalism.
{\em The collected works of Eugene Paul Wigner: Part A: The scientific papers} (1993), 298-333.
	
	
	\bibitem{Peralta2023}
	G.~M.~Escolano, A.~M.~Peralta, and A.~R.~Villena,
	Lie--Trotter formulae in Jordan--Banach algebras with applications to the study of spectral--valued multiplicative functionals,
	arXiv:2305.05530, 2023
 
	\bibitem{Suzuki1976}
	M. Suzuki,
	Generalized Trotter's formula and systematic approximants of exponential operators and inner derivations with applications to many-body problems,
	{\em Comm. Math. Phys.}
	{\bf 51} (1976), no. 2, 183--190. 
	
	\bibitem{Suzuki1985}
	M. Suzuki,
	Transfer--matrix method and Monte Carlo simulation in quantum spin systems,
	{\em Phys. Rev. B}
	{\bf 31} (1985), no. 5, 2957--2965. 
	
	
	\bibitem{Wright1977}
	J. D. M. Wright,
	Jordan C*-algebras,
	{\em Michigan Math. J.} {\bf 24} (1977), no. 3, 291--302.
\end{thebibliography}
\end{document}